\theoremstyle{plain}
 \newtheorem{thm}{Theorem}[section]
 \newtheorem{lem}[thm]{Lemma}
 \newtheorem{prop}[thm]{Proposition}
 \newtheorem{cor}[thm]{Corollary}
\theoremstyle{definition}
 \newtheorem{definition}[thm]{Definition}
 \theoremstyle{remark}
  \newtheorem{rmk}[thm]{Remark}
 \newtheorem{exm}[thm]{Example}
    \newcommand{\sL}{\mathscr{L}}
        \newcommand{\cb}{\ulcorner}
        \newcommand{\rb}{\urcorner}
  \newcommand{\wt}{\widetilde}
  \newcommand{\Aut}{{\rm Aut}}
  \newcommand{\nms}{\negmedspace}
\title{The (Nested) Word Problem}
\author{Christopher S. Henry}
\begin{document}


\setcounter{secnumdepth}{1}
\setcounter{tocdepth}{2}

\begin{abstract} In this article we provide a new perspective on the word problem of a group by using languages of nested words. These were introduced by Alur and Madhusudan as a way to model programming languages such as HTML. We demonstrate how a class of nested word languages called visibly pushdown can be used to study the word problem of virtually free groups in a natural way. 

\noindent{\bf AMS MSC classes:} 20F10,20E05,68Q45,03D40.

\noindent{\bf Key-words:} word problem, formal languages, visibly pushdown languages.
\end{abstract}

\maketitle 

\vspace{-1cm}
\section{Introduction}\label{sec-intro}

There are deep connections between formal language theory and group theory, and this relationship has been used  to successfully explore the structure of groups by mathematicians. For example, in the foundational work of Epstein et al \cite{E+}, the authors  introduce the notion of automatic structures on groups which are defined in terms of regular languages. They show that many naturally occurring groups admit automatic structures, such as the braid groups, mapping class groups, and many 3-manifold groups. Moreover, they show that the word problem is solvable for any automatic group, and in fact  can be solved in quadratic time. Further, automatic groups share a number of other appealing algorithmic properties, and this is an example of a fruitful application of  formal  languages to the study of groups.

In order to describe the relationship between formal languages and groups, one typically uses a finite generating set for the group. For instance, given a class $\sL$ of formal languages, one interesting question is whether a given (finitely generated) group has word problem in $\sL.$ In \cite{MS} Muller and Schupp show that a finitely generated group $G$ has a word problem that is a context-free language if and only if $G$ is virtually free. Their work has been extended by considering other formal language classes, see \cite{He}, \cite{HT}, \cite{EO}, and \cite{HOT}, and similar results have been obtained by considering the complement of the word problem, see \cite{H+} and \cite{HR}.

In this article we extend the definition of the word problem to languages of nested words. These were first introduced by Alur and Madhusudan in \cite{AM} as a way to model programming languages such as HTML more efficiently. A nested word contains additional structure called a matching relation which specifies how letters in the word should be paired together. This additional structure means that the standard definition of word problem is not meaningful for nested words, so we provide an appropriate one. Using this definition we study groups with a word problem that is a visibly pushdown language of nested words (VPL). VPLs have a close relationship to both regular languages and context-free languages. Our main result is: \\

{\bf Theorem \ref{thm-nwp}}
A finitely generated group admits a visibly pushdown nested word problem if and only if it is virtually free. \\

In other words, groups with a VPL nested word problem are the same as those with a context-free word problem. However, we show that it can be more natural to consider nested words. For example in the free group any trivial word has a canonical matching relation associated to it, corresponding to successive cancellations of generators. Furthermore, VPLs have nicer closure properties than context-free languages. By proving some additional closure properties we demonstrate that trivial words for two different classes of virtually free groups also have a canonical matching relation. This  in turn provides a nice correspondence between the group theoretic properties of direct/semi-direct product and closure properties of  VPLs. 

The paper is organized as follows. In the second section, we provide the necessary background on formal languages. Standard material on regular and context-free languages is presented using pushdown automata, which allows for an easy introduction to nested words and VPLs. Section Three describes how formal languages are typically used to study the word problem.  In the fourth and final section, we introduce the nested  word problem and use VPLs to study the structure of virtually free groups.

\section{Preliminaries}

In this section, we review basic results on context-free languages, regular languages, and visibly pushdown languages. There are many different ways to describe context-free and regular languages, for example via grammars or regular expressions. Our presentation uses a machine theoretic approach which is convenient for introducing visibly pushdown languages. 
 \subsection{Formal Languages}\label{sec-fl}
  Let \( A \) be a finite set, which we will call an alphabet. For each \( n  \in \mathbb{N} \), we let \(A^n = \{ w \mid w: \{1,2,\dots,n\}   
 \rightarrow A \) is a function\}. An element \( w  \in A^n \) is called a word of length \(|w|=n\) and denoted by \( w = a_1 \cdots a_n \), where \( w(i) = a_i  \in A  \).
For \(1 \leq i < n\) let \(w[i]=a_1\cdots a_i\) be the prefix of \(w\) of length \(i\), and set \(w[i]=w\) for \(i \geq n\).
 Denote by \( \epsilon \)  the unique element \( \epsilon : \varnothing \rightarrow A \) of \( A^0 \) called the empty word. Finally let \( A^* = \bigcup_{n=0}^{\infty} A^n \) be the set of all finite words over the alphabet \( A \).
 
 \begin{definition}
 Given an alphabet \( A \), a language over \( A \) is any subset \( L \subset A^* \) .
 \end{definition}
 
 For \( L \) to be a meaningful collection of words one would expect (at least) an algorithm
 to recognize when an arbitrary word \(w \in A^*\) lies in \(L\). In fact, one way to define  formal languages classes is precisely 
 by the type of algorithm which recognizes words. The notion of algorithm is formalized by defining machines (automata), which can be thought of as reading in words and deciding whether or not they belong to the language.
 
 
 \subsection{Context-free Languages}
\begin{definition}\label{ch-def-1.10}
A  deterministic pushdown automaton (PDA) M is a tuple \\ 
\((A,S, \Gamma,s_0, \gamma_0,Y, \delta)\) such that:

 \begin{enumerate}

 \item  \( A \) is an alphabet,
 \item \( S \) is a finite set of states,
 \item \( \Gamma \) is a finite stack alphabet,
 \item \( s_0 \in S \) is the start state, and \(\gamma_0 \in \Gamma\) is the bottom of stack symbol,
 \item  \( Y \subset S \) is the set of accept states,
 \item \( \delta : D  \rightarrow S \times \Gamma^* \) is the transition function defined on a subset \(D \subset S \times (A \cup \{\epsilon\}) \times \Gamma\). \\ It is deterministic with respect to epsilon transitions, i.e.  \((s,\epsilon,\gamma) \in D \Rightarrow (s,a,\gamma) \notin D \) for all \(a \in A\).
 
\end{enumerate}
 \end{definition}
   
  To make transitions, a PDA needs to know the current state \(s\), the input symbol being read in \(a\), and the current top of stack symbol \(\gamma\). It transitions to a new state \(s'\), erases \(\gamma\), and replaces it on the top of the stack by a finite word \(\chi=\gamma_1 \cdots \gamma_n \in \Gamma^*\). We think of \(\chi\) as being added to the stack one letter at a time, starting with \(\gamma_1\) and ending with \(\gamma_n\) which becomes the new top of stack symbol. We interpret \(\delta(s,\epsilon,\gamma)\) as the machine performing a stack operation without having to read an input symbol. 

 \begin{definition}\label{ID-of-PDA}
Given a deterministic PDA \(M = (A,S, \Gamma,s_0, \gamma_0,Y, \delta)\), an instantaneous description of \(M\) is a triple \((s,w,\chi) \in S \times A^* \times \Gamma^*\).  
For \(a \in (A \cup \{\epsilon\})\) we let \((s,aw,\chi\gamma) \vdash (t,w,\chi\chi') \) if \(\delta(s,a,\gamma) = (t,\chi')\). Denote by \(\vdash^*\) the reflexive and transitive closure of \(\vdash\).  The language of words accepted by M is \[L(M) = \{w \in A^* \mid (s_0,w,\gamma_0) \vdash^*(y,\epsilon,\chi) \text{ for some } y \in Y \text{ and } \chi \in \Gamma^*\}.\] 
 \end{definition}

 \begin{definition}
 A language \(L \subset A^*\) is deterministic context-free (CF) if there exists some PDA \(M\) such that \(L=L(M)\). We denote the class of all deterministic context-free languages by \(\sL_{CF}\).

 \end{definition}
 
 \begin{exm}\label{ch-ex-1.14}
 
 Consider the alphabet \(A = \{a,b\}\). The language of words given by  \( L = \{ a^nb^n \mid n = 0, 1, \dots \} \) is CF, where \( a^n = \overbrace{a\cdots a}^\text{n times} \) and \( a^0=b^0=\epsilon \). A PDA accepting $L$ has stack alphabet
 \(\Gamma = \{0,1\}\) with \(\gamma_0 = 0\), and set of states  \(S = \{s_0,s_1,s_2,s_y,s_f\}\) with \(Y = \{s_0,s_y\}\). The transition function satisfies:
 
 \[
  \delta(s,x,\gamma) = \begin{cases}
         (s_i,a,i) \mapsto (s_1,i1)  & \text{ for $i\in \{0,1\}$},   \\
         (s_i,b,1) \mapsto (s_2,\epsilon) &  \text{ for $i\in \{1,2\}$}, \\
        (s_2,\epsilon,0)  \mapsto (s_y,\epsilon), & \\
        (s,x,\gamma) \mapsto (s_f,\gamma) & \text{in all other cases}.  
         \end{cases}
 \]
 \end{exm} 
  
The class of CF languages satisfies two important closure properties.
 
  \begin{prop}[Hopcroft and Ullman \cite{HU}]
  Let \(L \in \mathscr{L}_{CF}\) be a CF language over the alphabet
   \(A\). The following languages are also CF:
  \begin{enumerate}

  \item \(L_{\rm comp} = A^* \smallsetminus L \),
  \item \(L_{\rm pre} = \{w \mid wu = v \in L \} \), the prefix closure of \(L\).
  \end{enumerate}

  \end{prop}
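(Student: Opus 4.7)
The plan is to handle the two closure properties separately, starting from a DPDA $M=(A,S,\Gamma,s_0,\gamma_0,Y,\delta)$ recognizing $L$.

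For (1), I would first transform $M$ into an equivalent DPDA $M'$ that on every input $w\in A^*$ consumes all of $w$ and then halts in a well-defined state; the machine for $L_{\rm comp}$ is then obtained from $M'$ by swapping $Y$ and $S\smallsetminus Y$. Two pathologies must be eliminated before swapping makes sense: (i) $M$ may get stuck because $\delta$ is only partially defined, and (ii) $M$ may enter an infinite sequence of $\epsilon$-transitions without consuming further input. For (i), I add a fresh non-accepting sink state that absorbs any remaining input without modifying the top of the stack, and redirect every missing transition into it. For (ii), I use the fact that during a non-terminating $\epsilon$-computation the current state together with the top of the stack must eventually repeat, so such loops can be detected by finite bookkeeping and cut off, either by sending the computation to the sink or, if an accepting state was visited inside the loop, by first routing through a new accept state. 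After these surgeries, swapping accept and reject states yields a DPDA for $L_{\rm comp}$.

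For (2), I would characterise $L_{\rm pre}$ in terms of the \emph{live} configurations of $M$, where $(s,\chi)$ is live if some continuation satisfies $(s,u,\chi)\vdash^*(y,\epsilon,\chi')$ with $y\in Y$. The goal is to build a DPDA $M_{\rm pre}$ that simulates $M$ step for step but accepts precisely when the current simulated configuration is live. To read liveness off the current state together with the stack, I would perform the classical fixed-point computation that, for each pair $(s,\gamma)\in S\times \Gamma$, decides (a) whether some computation from $(s,\gamma)$ reaches an accept state without ever popping $\gamma$, and (b) for each $t\in S$, whether some computation from $(s,\gamma)$ pops $\gamma$ while ending in state $t$. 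I would then annotate each stack symbol pushed by $M$ with the state in which it was pushed; liveness of the whole stack can then be determined by composing the precomputed data from the top downwards. Declaring accepting exactly those new states for which the annotated stack is live gives a DPDA for $L_{\rm pre}$.

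The main obstacle is the $\epsilon$-loop surgery in (1): while the idea of detecting and short-circuiting unproductive $\epsilon$-loops is classical, formalising it requires bounding how much the stack can grow inside such a loop and verifying that inserting the cut-off transitions preserves the determinism condition on $\epsilon$-moves. This is the delicate step in the construction of Hopcroft--Ullman and is where a full write-up would need the most care; once it is in place, both (1) and (2) reduce to essentially mechanical modifications of the transition function and accept set of $M$.
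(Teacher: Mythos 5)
The paper offers no proof of this proposition at all: it is stated as a citation to Hopcroft and Ullman \cite{HU}, so your proposal can only be measured against the standard textbook argument, which is indeed the one you are reconstructing. Your outline of part (2) is essentially the classical ``predicting machine'': precompute, for each pair $(s,\gamma)$, whether an accepting state is reachable without popping $\gamma$ and the set of states reachable upon popping $\gamma$, and carry liveness information on the stack. One imprecision: the automaton cannot ``compose the precomputed data from the top downwards'' at acceptance time, since it only sees the top of the stack; the liveness summary must instead be maintained incrementally, each pushed cell being annotated (at push time) with the set of control states that would make the configuration live given the annotation of the cell below it, and acceptance read off from the top annotation alone. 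That is a presentational fix, not a conceptual one.

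Part (1) has a genuine gap. Under the acceptance condition of Definition \ref{ID-of-PDA}, a word $w$ is accepted if \emph{some} configuration reached after consuming $w$ --- including intermediate configurations along a finite chain of $\epsilon$-moves --- has an accepting state. Even after you remove stuck configurations and cut off infinite $\epsilon$-loops, a single input position can still pass through both accepting and non-accepting states along a finite $\epsilon$-chain; swapping $Y$ and $S\smallsetminus Y$ then accepts such a word in \emph{both} machines, so the result is not the complement. The classical repair, which is the real content of the Hopcroft--Ullman proof, is to adjoin to each state a flag recording whether an accepting state has been visited since the last genuine input symbol was consumed, and to defer the accept/reject decision to the instant the next input symbol is read (or to the end of input, handled via a final sweep or endmarker), accepting in the complement machine exactly when that flag is still unset. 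You gesture at this idea only inside the $\epsilon$-loop surgery (``if an accepting state was visited inside the loop\ldots''), but it is needed for \emph{every} $\epsilon$-chain, not just the non-terminating ones; without it the construction as stated fails.
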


By restricting stack operations of a PDA we obtain an important subclass of CF languages called regular languages. 

  \begin{definition}
 A deterministic finite state automaton (FSA) is a PDA M satisfying:
 \begin{enumerate}
 \item[(3')]  \( \Gamma = \varnothing\), 
 \item[(6')] \( D \subset S\times A\), i.e. epsilon transitions are not allowed.  
\end{enumerate}
 \end{definition}

 \begin{definition}
 A language of words \( L \subset A^* \) is called regular if there exists some FSA \(M\) such that \( L = L(M) \). We denote the class of all regular languages by \(\sL_{\rm reg}\).
 
\end{definition}

\begin{exm}\label{ch-ex-1.4}

Consider the alphabet \( A = \{ a , b \} \) and the language of words given by \( L = \{ a^mb^n \mid m,n = 0, 1, \dots \} \).  Figure \ref{fig:FSA1} below depicts a FSA recognizing this language
which demonstrates that it is regular. Note that \(L\) contains the language in Example \ref{ch-ex-1.14}. The FSA reads in a number of \(a\)'s  (possibly none) 
followed by a number of \(b\)'s (possibly none), but cannot keep track of how many letters it has read in since \(\Gamma = \varnothing\). 
\end{exm}

\begin{figure}[H]
\begin{center}
\leavevmode

\includegraphics[trim = 30mm 180mm 20mm 40mm,clip]{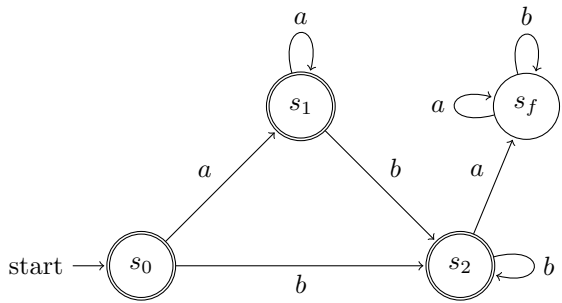}
\caption{A FSA recognizing the language of Example \ref{ch-ex-1.4}. The set of states is \(S=\{s_0,s_1,s_2,s_f\}\) depicted by nodes. The double lines indicate the subset of accept states, and transitions are depicted by labeled arrows. }
  \label{fig:FSA1}
  \end{center}

\end{figure}

Regular languages have nice closure properties (see \cite{E+} or \cite{HU}), some of which do not hold in general for CF languages.
  
  \begin{prop}\label{ch-thm-1.9}
  Let \(L\), \(L_1\), and \(L_2 \in \mathscr{L}_{reg}\)  be regular languages over the alphabet
   \(A\). The following languages are also regular:
  \begin{enumerate}

  \item \(L_1 \cap L_2\),
  \item \(L_1 \cup L_2\), 
  \item \(L_1L_2 = \{ w_1w_2 \mid w_1 \in L_1 \text{ and } w_2 \in L_2\}\),
  \item \(L^*= \{ w_1 \dots w_n \mid n \in \mathbb{N} \text{ and } w_i \in L \}\), the Kleene-star closure of L,
  \item \(L^R= \{w^R=a_n \cdots a_1 \mid w=a_1 \cdots a_n \in L \}\),
  \item \(L_{\rm comp} = A^* \smallsetminus L \),
  \item \(L_{\rm pre} = \{w \mid wu = v \in L \} \), the prefix closure of \(L\).
  \end{enumerate}
  
  Properties 1-5 do not hold in general for CF languages.

\end{prop}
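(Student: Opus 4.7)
The plan is to construct, for each closure property, an FSA (or nondeterministic FSA) recognizing the desired language. Throughout I would rely on the standard subset construction, which shows that nondeterministic FSAs recognize exactly $\sL_{\rm reg}$, so it suffices to produce nondeterministic machines in the harder cases. Fix FSAs $M_i=(A,S_i,\varnothing,s_0^i,-,Y_i,\delta_i)$ recognizing $L_i$ for $i=1,2$, and let $M=(A,S,\varnothing,s_0,-,Y,\delta)$ similarly recognize $L$. For (1) and (2), the classical product construction suffices: take state set $S_1\times S_2$, start state $(s_0^1,s_0^2)$, transition $\delta((s,t),a)=(\delta_1(s,a),\delta_2(t,a))$, and accept set $Y_1\times Y_2$ for intersection or $(Y_1\times S_2)\cup(S_1\times Y_2)$ for union. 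For (6), first complete $\delta$ by adjoining a non-accepting sink state that absorbs all undefined transitions, then swap the accept and non-accept states. For (7), keep the same transition function and let the new accept set consist of those states from which some state of $Y$ is reachable in the transition graph of $M$.

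The remaining three cases naturally introduce nondeterminism. For (3), take disjoint copies of $M_1$ and $M_2$ and add $\epsilon$-transitions from each state of $Y_1$ to $s_0^2$, declaring $Y_2$ the new accept set. For (4), add $\epsilon$-transitions from each state of $Y$ back to $s_0$ and adjoin a new start state (which is itself accepting, so that $\epsilon\in L^*$) with an $\epsilon$-transition to $s_0$. For (5), reverse every transition of $M$, make $s_0$ the unique new accept state, and adjoin a new start state with $\epsilon$-transitions to each element of $Y$. In each case correctness is a routine induction on computation length, after which one invokes the subset construction to recover a deterministic FSA. The main bookkeeping burden is keeping the nondeterministic apparatus consistent with the strictly deterministic definition the paper has adopted, but nothing genuinely new is required.

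For the final claim that (1)--(5) fail for the (deterministic) CF class, let $K_1=\{a^ib^jc^j\mid i,j\geq 0\}$ and $K_2=\{a^ib^ic^j\mid i,j\geq 0\}$; both are recognized by easy deterministic PDAs, yet $K_1\cap K_2=\{a^nb^nc^n\mid n\geq 0\}$ is not even context-free (by the usual pumping argument), so (1) fails. Since the earlier proposition shows $\sL_{CF}$ is closed under complement, De Morgan's laws then force (2) to fail as well. The failure of (3)--(5) is classical, and suitable counterexamples can be found in Hopcroft and Ullman \cite{HU}.
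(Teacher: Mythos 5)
The paper does not actually prove this proposition; it simply cites Hopcroft--Ullman \cite{HU} and Epstein et al.\ \cite{E+} and moves on, so any correct proof you give is necessarily ``different'' in that it supplies the argument the paper delegates to the references. Your constructions are exactly the standard ones from those references (product automaton for $\cap$ and $\cup$, completion-and-swap for complement, reachability of accept states for prefix closure, $\epsilon$-NFAs plus the subset construction for concatenation, Kleene star, and reversal), and your counterexample for the CF failures --- $K_1=\{a^ib^jc^j\}$, $K_2=\{a^ib^ic^j\}$ with non-context-free intersection $\{a^nb^nc^n\}$, followed by De Morgan together with the paper's earlier closure-under-complement statement for $\sL_{CF}$ to kill union --- is the classical one and is correctly adapted to the paper's convention that $\sL_{CF}$ means \emph{deterministic} context-free (for which (3)--(5) do indeed also fail, unlike for general CF languages). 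One small caveat: the paper's FSA has a partial transition function ($D\subset S\times A$), so in your union construction the product machine can get stuck when only one factor's transition is undefined even though the other factor would accept; you should complete both machines with sink states \emph{before} forming the product, just as you already do for complement. With that adjustment the proof is complete and entirely in the spirit of the sources the paper cites.
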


 
 \subsection{Nested words and VPLs}
Visibly pushdown languages (VPLs) are a relatively new class of language introduced by Alur and Madhusudan in \cite{AM}. VPLs are comprised of nested words, which are words with an associated matching relation. One can interpret a VPL as being accepted by a particular type of PDA over an extended alphabet.

  \begin{definition}\label{definition-nestedword}
  A  nested word is a pair \((w,\frown)\) with \(w=a_1\dots a_n \in A^* \) and \(\frown\) is a subset of \(\{-\infty,1,\dotsc,n\} \times \{1, \dotsc,n,\infty\}\) satisfying:
  \begin{enumerate}

 \item (Matching edges go forward) \(i\frown j \Rightarrow i < j\),
 \item (Uniqueness) \(\forall\) \(1 \leq i \leq n\), \(\mid\{j \mid i\frown j\}\mid \leq 1\) and \(\mid\{j \mid j\frown i\}\mid \leq 1\),
 \item (Nesting Property) If \(i_1\frown j_1\) and \(i_2\frown j_2\) with \(i_1 < i_2\), then either \(j_2 < j_1\) or \(j_1 < i_2\).

\end{enumerate}
  \end{definition}
  
  If \(i \frown j\) then we say that \(a_i\) is a call and \(a_j\) is return. When \(i \frown \infty\)  \(a_i\) is called a pending call, and similarly \(-\infty \frown j\) makes \(a_j\) a pending return. If \(a_i\) is neither a call nor a return, then  it is called an internal symbol. 
Let \(NW(A)\) be the set of all nested words over \(A\). 
 
 We may encode nested words over \(A\) by extending the alphabet to the tagged alphabet \(\wt{A} = \cb \negthickspace A \cup A \cup A \! \rb\). The alphabets \(\cb \negthickspace A\) and \(A \! \rb\) are disjoint copies of \(A\) where each element \(a \in A\) is replaced with \( \cb \nms a\) and \(a \nms \rb\), respectively. 
 The idea is that we can tag the letters of any  word \(w \in A^*\) to be either a call or return, or left as an internal symbol. We use $\wt{a} \in \wt{A}$ to denote  an element in the tagged alphabet, so \(\wt{a} \in \{\cb \nms a,a, a \nms \rb \}\).
A tagged word  \(\wt{w} \in \wt{A}^*\) can then be written as  \(\wt{w} = \wt{a}_1 \cdots \wt{a}_n \).
  
  \begin{lem}[2.1 in \cite{AM}] \label{lem-encode}
  There is a natural bijection \(\tau : NW(A) \rightarrow \wt{A}^*\)  given by extending the map: 
   \[
   a_i \mapsto \begin{cases}
         \cb \nms a_i &  \text{if }  a_i \text{ is a call},  \\
         a_i \nms \rb  &  \text{if }  a_i \text{ is a return},  \\
         a_i  &  \text{if }  a_i \text{ is an internal symbol}.  \\
  
         \end{cases}
 \]

  \end{lem}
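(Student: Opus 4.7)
The plan is to construct an explicit inverse $\sigma:\wt{A}^* \to NW(A)$ via a stack-based parenthesis matching, and then verify that $\tau\circ\sigma$ and $\sigma\circ\tau$ are both identities. The map $\tau$ itself is clearly well defined on $NW(A)$, since each position in a nested word is by Definition \ref{definition-nestedword} exactly one of: call, return, or internal (pending calls and pending returns are just calls and returns matched to $\pm\infty$), so the tagging rule is unambiguous.

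To define $\sigma$, let $\wt{w} = \wt{a}_1\cdots \wt{a}_n \in \wt{A}^*$, and let $w = a_1 \cdots a_n$ be obtained by stripping the tags. Make one left-to-right pass maintaining a stack of indices: when $\wt{a}_i = \cb\nms a_i$, push $i$; when $\wt{a}_i = a_i\nms\rb$, pop the top index $k$ (if any) and record $k \frown i$, otherwise record $-\infty \frown i$; when $\wt{a}_i = a_i$, do nothing. After the pass, for every index $k$ remaining on the stack record $k \frown \infty$. I would then verify that this relation $\frown$ satisfies the three axioms of Definition \ref{definition-nestedword}: the forward condition holds since pops occur strictly later than pushes and $-\infty < j < \infty$; uniqueness holds because each index is pushed and popped at most once; and the nesting property is the standard LIFO consequence, proved by case analysis on whether $i_2$ is pushed before or after the pop at $j_1$.

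The composition $\tau\circ\sigma$ is the identity by construction: the tag of each position is recovered directly from whether $\sigma$ pushed it, popped to it, or left it untouched, including the $\pm\infty$ cases. The genuinely substantive direction is $\sigma\circ\tau = \mathrm{id}_{NW(A)}$, which amounts to showing that the matching $\frown$ on any nested word $(w,\frown)$ is forced to coincide with the parenthesis matching produced by $\sigma$. I would prove this by induction on $n = |w|$. The key step is to show that the leftmost (finite) return position $j$ must be matched to the most recent unmatched call position $i$ before it: if instead $i' \frown j$ for some earlier $i' < i$, then applying axiom (3) to the pair $i' \frown j$ and the pair containing $i$ (which exists since $i$ is an unmatched call at the point $j$ is processed, so its partner $j'$ satisfies $j' > j$) yields $j < i$ or $j' < j$, both contradictions. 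Removing the pair $i \frown j$ and any pending $-\infty \frown k$ for $k < j$ (whose presence is forced by minimality of $j$) reduces the problem to a strictly shorter nested word, where induction applies.

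The main obstacle is this uniqueness argument, since one must carefully handle the interaction with pending calls and returns and make sure the inductive reduction preserves the nested word structure. Once that is in place, $\tau$ and $\sigma$ are mutually inverse and the bijection is established.
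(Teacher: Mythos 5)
Your argument is essentially correct, but be aware that there is no proof in the paper to compare it against: the statement is imported as result 2.1 of \cite{AM} and used as a black box, so your write-up is supplying an argument the paper omits. The route you take --- constructing the explicit inverse $\sigma$ by a single stack pass, checking that $\sigma(\wt{w})$ satisfies the three axioms of Definition \ref{definition-nestedword}, and then proving $\sigma\circ\tau=\mathrm{id}$ by showing the matching relation is forced by the tags --- is the standard one, and it is exactly the fact the paper relies on later (the remark after Theorem \ref{thm-vplontocfl} that a tagged word admits only one interpretation as a matching relation). Your key uniqueness step is sound: if $j$ is the leftmost non-pending return and $i'\frown j$ with $i'$ strictly before the last call $i$ preceding $j$, then $i$ has a partner $j'$ with $j'>j$, and axiom (3) applied to $i'\frown j$ and $i\frown j'$ forces $j'<j$ or $j<i$, a contradiction. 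One point you flag only implicitly but should state outright, since it is needed both for $\sigma\circ\tau=\mathrm{id}$ and for your inductive reduction to shorter words to be legitimate: a return $k$ cannot be pending in $(w,\frown)$ if some call $i''<k$ is still unmatched at $k$, i.e. $i''\frown j''$ with $j''>k$; this again follows from axiom (3) applied to $-\infty\frown k$ and $i''\frown j''$, and it is precisely what guarantees that the pending returns of $\frown$ coincide with the empty-stack returns produced by $\sigma$ (and, dually, that the calls left on the stack are exactly the pending calls). With that observation made explicit, both compositions are identities and the bijection is established; the remaining bookkeeping in the induction (re-indexing after deleting the pair $i\frown j$ and the pending returns before $j$) is routine.
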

 
Therefore a language of nested words over $A$ is any subset $L \subset \wt{A}^*$. This leads to a convenient way of defining visibly pushdown languages,  which are also called regular languages of nested words.
 
 \begin{definition}
 Given an alphabet \(A\), a visibly pushdown automaton (VPA) is a PDA over the extended alphabet \(\wt{A}\) satisfying the additional property:
 
 \begin{itemize}

 \item[(7)] \( \delta = \delta_c \cup \delta_i \cup \delta_r \) is the transition function, which depends on the symbol \(\wt{a} \in \{a,\cb \nms a,a \nms \rb\}\) ,
  \[
  \delta(s,\wt{a}) =  \begin{cases}
        \delta_c(s,\cb \nms a) =(s',\gamma_i) & \text{   if     } \wt{a}=\cb \nms a,     \\
       \delta_i(s,a) = s' & \text{   if     } \wt{a}=a, \\
      \delta_r(s,a \nms \rb, \gamma) = s' & \text{   if     } \wt{a}=a \nms \rb.  \\
  
        \end{cases}
 \]
\end{itemize}

\end{definition}

 \begin{definition}
 Given an alphabet \(A\), a language of nested words \(L \subset \wt{A}^*\) is called visibly pushdown (or a regular language of nested words) if there exists a VPA \(\wt{M}\) such that \(L=L(\wt{M})\).
 \end{definition}

The following results emphasize how VPLs are closely related to CF languages.

\begin{thm}[5.1 in \cite{AM}]\label{thm-vpltocfl}
If \(L \subset \wt{A}^*\) is a visibly pushdown language, then it is also a CF language over the alphabet \(\wt{A}\).
\end{thm}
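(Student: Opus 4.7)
The plan is to simulate any VPA directly by a PDA over the tagged alphabet $\wt{A}$, keeping all of its components fixed and only expanding the transition function. Let $\wt{M} = (\wt{A}, S, \Gamma, s_0, \gamma_0, Y, \delta_c \cup \delta_i \cup \delta_r)$ be a VPA with $L(\wt{M}) = L$; I will produce a deterministic PDA $M = (\wt{A}, S, \Gamma, s_0, \gamma_0, Y, \delta)$ in the sense of Definition \ref{ch-def-1.10} that recognizes the same language.

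I would define $\delta$ by three clauses, each mirroring one family of VPA transitions. Whenever $\delta_c(s, \cb \nms a) = (s', \gamma_i)$, I set $\delta(s, \cb \nms a, \gamma) = (s', \gamma \gamma_i)$ for every $\gamma \in \Gamma$; by the conventions of Definition \ref{ID-of-PDA}, this rewrites the top symbol $\gamma$ as $\gamma \gamma_i$, which amounts to pushing $\gamma_i$ on top of $\gamma$. Whenever $\delta_i(s, a) = s'$, I set $\delta(s, a, \gamma) = (s', \gamma)$ for every $\gamma \in \Gamma$, leaving the stack unchanged. Whenever $\delta_r(s, a \nms \rb, \gamma) = s'$, I set $\delta(s, a \nms \rb, \gamma) = (s', \epsilon)$, popping $\gamma$. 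Because $\cb \nms A$, $A$, and $A \nms \rb$ are pairwise disjoint in $\wt{A}$, these three clauses have disjoint domains and $\delta$ is well-defined, and since no $\epsilon$-transitions are introduced, the determinism condition on $\epsilon$-transitions holds vacuously.

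To establish $L(M) = L(\wt{M})$, I would argue by induction on $|\wt{w}|$ that $(s_0, \wt{w}, \gamma_0) \vdash_{\wt{M}}^* (t, \epsilon, \chi)$ in the VPA if and only if $(s_0, \wt{w}, \gamma_0) \vdash_{M}^* (t, \epsilon, \chi)$ in the PDA, for every $t \in S$ and $\chi \in \Gamma^*$. The base case is immediate, and the inductive step splits into three sub-cases according to whether the next tagged symbol is a call, an internal, or a return; in each, the construction guarantees that one VPA step and one PDA step produce identical new configurations. Intersecting with the common accept set $Y$ then gives $L(M) = L$, so $L$ is deterministic context-free.

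There is no real obstacle: the only point demanding even a moment's attention is that the PDA formalism requires $\delta$ to read a top-of-stack symbol even for calls and internals, where the VPA's action is independent of it. Quantifying the call and internal clauses uniformly over all $\gamma \in \Gamma$ reconciles this purely notational mismatch.
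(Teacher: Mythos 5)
Your construction is correct and is essentially the argument the paper relies on (the paper itself defers to Alur--Madhusudan and, indeed, its Definition of a VPA already declares it to be a PDA over $\wt{A}$, so the result is nearly definitional). Your only added content --- quantifying the call and internal clauses over all top-of-stack symbols $\gamma$ and checking the three domains are disjoint so determinism is preserved --- is exactly the right bookkeeping to reconcile the two transition-function signatures.
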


\begin{thm}[5.2 in \cite{AM}]\label{thm-vplontocfl}
If \(L \subset A^*\) is a CF language over the alphabet \(A\) then there exists a visibly pushdown language \(\wt{L} \subset \wt{A}^*\) such that $\rho: \wt{L} \twoheadrightarrow L$ is a surjection, where $\rho$ is the map that forgets the matching relation; i.e. $\rho(\wt{w})=w$.
   
 \end{thm}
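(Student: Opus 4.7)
The plan is to start with a PDA $M$ accepting $L$ and construct a VPA $\wt{M}$ over the tagged alphabet $\wt{A}$ whose runs mirror those of $M$. The key observation is that at each step of an accepting computation of $M$ on some $w \in L$, the transition being applied either pushes a symbol onto the stack, pops one off, or leaves the stack alone. This three-way distinction is exactly the extra information needed to tag each letter of $w$ as a call, return, or internal symbol, producing a nested word $\wt{w} \in \wt{A}^*$ with $\rho(\wt{w}) = w$.

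My first step would be to put $M$ into a normal form in which every non-$\epsilon$ transition $\delta(s,a,\gamma)=(s',\chi)$ is of one of three types: a pure push, with $\chi=\gamma\gamma'$ for some $\gamma' \in \Gamma$; a pure pop, with $\chi=\epsilon$; or stack-preserving, with $\chi=\gamma$ and the target state $s'$ independent of $\gamma$. This is obtained in the usual way by adding auxiliary states that decompose compound transitions into sequences of simpler ones, and by duplicating states to carry a copy of the top stack symbol in the finite control when necessary. The $\epsilon$-transitions must also be eliminated, since a VPA as defined above has no $\epsilon$-moves; this can be done by composing each maximal run of $\epsilon$-moves with the following letter-reading transition via a product on the state space, the determinism constraint for $\epsilon$-transitions in Definition \ref{ch-def-1.10} ensuring this construction is well-defined.

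Once $M$ is in this form, the VPA $\wt{M}$ is defined over $\wt{A}$ with the same states, stack alphabet, and accept set. For each push transition $\delta(s,a,\gamma)=(s',\gamma\gamma')$ set $\delta_c(s,\cb \nms a)=(s',\gamma')$; for each pop transition $\delta(s,a,\gamma)=(s',\epsilon)$ set $\delta_r(s,a \nms \rb,\gamma)=s'$; and for each stack-preserving transition set $\delta_i(s,a)=s'$. By construction, every accepting computation of $M$ on a word $w$ lifts to an accepting computation of $\wt{M}$ on a canonically tagged word $\wt{w}$ with $\rho(\wt{w})=w$, proving that $\rho:L(\wt{M}) \twoheadrightarrow L$ is surjective.

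The main technical obstacle will be the clean elimination of $\epsilon$-transitions, since a VPA cannot silently advance its state or modify its stack without consuming a tagged input letter; any $\epsilon$-induced stack activity must be folded into the following tagged reading step without violating the visibility discipline. A convenient feature of the problem is that $\rho$ need not be injective: a single $w \in L$ may admit several accepting computations in $M$ with distinct push/pop patterns, each producing a distinct nested-word encoding that projects to $w$, so all one needs is that at least one such encoding exists.
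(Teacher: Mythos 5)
The paper does not prove this statement; it is quoted verbatim as Theorem 5.2 of Alur--Madhusudan, whose proof runs through a context-free \emph{grammar} in a Greibach-type normal form (productions of the shape $X \to \epsilon \mid aY \mid aYbZ$), so that each terminal occurrence in a derivation performs at most one push or one pop and the derivation tree itself supplies the matching relation. Your overall strategy --- tag each letter of $w$ according to whether the accepting computation pushes, pops, or leaves the stack alone at that step --- is the right intuition, and your closing observation that $\rho$ need only be surjective, so that any one accepting computation per word suffices, is correct and important.

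The gap is in the reduction to the normal form, specifically the elimination of $\epsilon$-moves. Decomposing a compound transition $\delta(s,a,\gamma)=(s',\gamma_1\cdots\gamma_k)$ into single-symbol operations necessarily introduces $\epsilon$-moves, and a maximal run of $\epsilon$-moves in a PDA is not of bounded length: it can, for instance, pop an unbounded portion of the stack (its outcome depends on stack contents arbitrarily far below the top, so no ``product on the state space'' can absorb it into the next letter-reading transition), and it can occur after the last input letter, where there is no following transition to fold it into. More fundamentally, the visibly pushdown discipline allows exactly one stack operation per tagged letter, so a PDA run performing more stack operations than $|w|$ cannot be encoded as a tagging of $w$ at all; one must first prove that $L$ is accepted by a \emph{real-time} PDA doing at most one push or pop per input symbol. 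That existence claim is the actual mathematical content of the theorem, and it is precisely what the grammar normal form delivers; as written, your argument assumes it rather than proves it. (A smaller point: the intermediate machine obtained this way is nondeterministic, while the paper's Definition \ref{ch-def-1.10} makes PDAs, hence VPAs, deterministic, so you would also need to invoke the determinizability of VPAs from \cite{AM} to land in the stated class.)
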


This result relies on Lemma \ref{lem-encode} which establishes a correspondence between words over the tagged alphabet \(\wt{A}\) and nested words over \(A\). In other words, for a tagged word $w\in \wt{A}^*$, there is only one way to interpret the tagging such that it satisfies the properties of being a matching relation.
 Lemma \ref{lem-encode} also provides a natural way to define operations on nested words by using standard word operations over the tagged alphabet.  Given an alphabet \(A\), we define concatenation of two words \(\wt{w}_1, \wt{w}_2 \in \wt{A}^*\) to be \(\wt{w_1}\wt{w_2}\), and the prefix of a word \(\wt{w}  \in \wt{A}^*\) of length \(i\) to be \(\wt{w}[i]\). To define reversal consider a word \(\wt{w}=\wt{a}_1 \cdots \wt{a}_n\) with \(\wt{a}_i \in \{\cb \nms a_i,a_i \nms \rb,a_i\}\). The reversal of \(\wt{w}\) is given by \(\wt{w}^R=\wt{b}_{n}\cdots \wt{b}_{1}\), where
  \[ \wt{b}_{i} = \begin{cases}
         a_{i} \nms \rb & \text{ if } \wt{a}_{i}=\cb \nms a_{i}, \\
         \cb \nms a_{i} & \text{ if } \wt{a}_{i}=a_{i} \nms \rb,  \\
         a_{i} & \text{ if } \wt{a}_{i}=a_{i}. \\
  
         \end{cases}
 \]

 Using these operations for nested words, the following analogue of Theorem \ref{ch-thm-1.9} is true for VPLs.
\begin{thm}[3.5--3.7 in \cite{AM}]\label{thm-closurevpl}
  Let \(L\), \(L_1\), and \(L_2\) be VPLs over the extended alphabet
   $\wt{A}$. The following languages are also visibly pushdown:
  \begin{enumerate}

  \item \(L_1 \cap L_2\),
  \item \(L_1 \cup L_2\) ,
  \item \(L_1L_2\),
  \item \(L^*\),
  \item \(L^R=\{\wt{w}^R \mid \wt{w} \in L\}\),
  \item \(L_{\rm comp} = \wt{A}^* \smallsetminus L \),
  \item \(L_{\rm pre} = \{\wt{w} \mid \wt{w}\wt{u} = \wt{v} \in L \} \), the prefix closure of \(L\).
  \end{enumerate}
  
  \end{thm}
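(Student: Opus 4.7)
The plan is to adapt standard automaton closure constructions, exploiting the defining feature of VPAs: the tag of each input symbol (call, return, or internal) dictates the stack operation, so two VPAs reading the same tagged word perform synchronized stack operations. This synchronization, which fails for general PDAs, is exactly what allows VPAs to be closed under all Boolean operations.

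First I would handle intersection (1) via a product construction. Given VPAs $\wt{M}_1$ and $\wt{M}_2$, form a product with state set $S_1 \times S_2$ and stack alphabet $\Gamma_1 \times \Gamma_2$: on a call $\cb \nms a$ push the pair of stack symbols each machine would push; on a return $a \nms \rb$ pop the pair and route each component to its respective $\delta_r$; on an internal $a$ both components transition simultaneously. Taking accept states to be $Y_1 \times Y_2$ gives intersection, while $(Y_1 \times S_2) \cup (S_1 \times Y_2)$ gives union (2). For concatenation (3) and Kleene star (4), I would use nondeterminism: simulate $\wt{M}_1$ and at any moment when it is in an accept state nondeterministically jump to the start state of $\wt{M}_2$, or for Kleene star loop back to the start of $\wt{M}_1$; the visible tagging keeps stack discipline coherent across such jumps. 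For reversal (5), swap the roles of calls and returns in the transition function and run the machine in reverse: a push in $\wt{w}$ becomes a pop in $\wt{w}^R$, matching the swap of $\cb \nms a$ and $a \nms \rb$ in the definition of $\wt{w}^R$. For prefix closure (7), perform a reachability analysis on the configuration graph to identify configurations from which some extension leads to an accept state, and designate exactly those as new accept states.

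The chief obstacle is complement (6). Unlike for regular automata, a VPA presented by nondeterministic transitions cannot be complemented by flipping accept states alone, and the standard subset construction for PDAs fails because the stack is unbounded. The resolution, which is the heart of the Alur--Madhusudan theory, is to show that VPAs admit a determinization tailored to the visibly pushdown structure: the deterministic machine carries, as part of its finite state, a summary of each currently open call consisting of the set of possible (entry-state, exit-state) pairs across the eventual matched span. This summary is pushed on a call and consumed on the matching return, so the determinized simulation is again a VPA. Once a deterministic VPA is in hand, swapping accept and non-accept states yields the complement, and the remaining Boolean combinations then also follow by De Morgan.
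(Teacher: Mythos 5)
The paper itself offers no proof of this theorem: it is imported verbatim from Alur and Madhusudan (Theorems 3.5--3.7 of \cite{AM}), so the only comparison available is against that reference. Judged on its own terms, your sketch reproduces the standard constructions correctly in outline. The product construction for intersection and union is right, and for the right reason --- the tagging forces both simulated stacks to push and pop in lockstep, which is exactly what fails for general PDAs. You also correctly identify determinization via summaries of matched spans as the crux of complementation; that is the genuinely new idea in the visibly pushdown theory, and your description of the summary (pushed at a call, consumed at the matching return) is accurate.

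Two points need more than you have written. For concatenation and Kleene star, the phrase ``the visible tagging keeps stack discipline coherent across such jumps'' conceals the actual difficulty: since nested words may have pending calls and pending returns, after the nondeterministic jump the simulated $\wt{M}_2$ can read a return $a\nms\rb$ whose matching call lies in the first factor, forcing it to pop a stack symbol that $\wt{M}_1$ pushed and that $\wt{M}_2$ cannot interpret. The repair (as in \cite{AM}) is to mark each stack symbol with the phase in which it was pushed and to specify how a second-phase return consumes a first-phase symbol; without this the construction is incorrect. For prefix closure, ``reachability analysis on the configuration graph'' is an analysis of an infinite graph as stated; you need either the fact that the set of configurations from which acceptance is reachable is a regular (hence finitely presentable) set of stack contents, or the simpler VPA-specific argument. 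Finally, note that in this paper VPAs are deterministic by definition, so the nondeterministic jumps in (3) and (4) and the transition reversal in (5) implicitly invoke determinization a second time to land back in the class as defined here.
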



\section{Formal Languages and the Word Problem}\label{sec-wp}

Here we show how formal languages are typically used to study the word problem for groups. Given a finitely generated group with presentation \(G = \langle X \mid R \rangle\),  consider the alphabet \(A = X \cup X^{-1}\). There is a canonical monoid epimorphism \(\pi: A^* \twoheadrightarrow G\) taking \(w \in A^*\) to the group element \(\pi(w)=\overline{w}\) that it represents.
For any two words \(w_1, w_2 \in A^*\), the word problem asks for an algorithm to check whether \(\overline{w_1}=\overline{w_2}\) in G. Taking \(w=w_1w_2^{-1}\) this is equivalent to asking for an algorithm that checks whether a given word \(w \in A^*\) is equal to the identity in \(G\).  
 \begin{definition}\label{defn-wp}
Given a formal language class \(\sL\), we say that a finitely generated group \(G=\langle X \mid R\rangle\) admits an \(\sL\) word problem with respect to \(A\) if there exists an \(L\in \sL\) such that 
 \[L=W_A(G) = \{w \in A^* \mid \overline{w}=1\}.\]
 \end{definition}

For regular languages, Anisimov showed that this notion is independent of the choice of generating set. 

\begin{lem}[Anisimov \cite{An}]\label{lem-regwpcog}
Let $\langle X_1 \mid R_1 \rangle$ and $\langle X_2 \mid R_2 \rangle$ be two presentations of a finitely generated group $G$. Then $G$ admits a regular word problem with respect to $A_1$ if and only $G$ admits a regular word problem with respect to $A_2$. 
\end{lem}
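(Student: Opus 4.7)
The plan is to show the forward direction (the reverse follows by swapping the roles of $A_1$ and $A_2$): assume $W_{A_1}(G)$ is regular and deduce that $W_{A_2}(G)$ is regular. The key observation is that changing generators amounts to applying a monoid homomorphism, and regular languages are closed under inverse homomorphism.

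The first step is to translate generators. For each $x \in X_2$, choose a word $w_x \in A_1^*$ with $\overline{w_x} = x$ in $G$, and set $w_{x^{-1}} = w_x^{-1}$ (the formal inverse in $A_1^*$). This defines a monoid homomorphism $\phi: A_2^* \to A_1^*$ by $a \mapsto w_a$, extended multiplicatively. Because $\pi \circ \phi = \pi$ on letters and both maps are homomorphisms, $\overline{\phi(w)} = \overline{w}$ for every $w \in A_2^*$. Consequently,
\[
W_{A_2}(G) = \phi^{-1}\bigl(W_{A_1}(G)\bigr).
\]

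The second step is to show that $\phi^{-1}(W_{A_1}(G))$ is regular given that $W_{A_1}(G)$ is. Let $M_1 = (A_1, S, \varnothing, s_0, \varnothing, Y, \delta_1)$ be an FSA recognizing $W_{A_1}(G)$, where I have applied the convention from Definition 2.5 that $\Gamma = \varnothing$. Define an FSA $M_2$ over $A_2$ with the same state set $S$, same start state $s_0$, same accept set $Y$, and transition function
\[
\delta_2(s, a) \;=\; \delta_1^{\,*}\bigl(s,\, w_a\bigr),
\]
where $\delta_1^{\,*}$ denotes the iterated application of $\delta_1$ along the fixed finite word $w_a$. Since $w_a \in A_1^*$ is a single fixed word for each $a \in A_2$, the value $\delta_2(s,a)$ is a well-defined element of $S$, and $\delta_2: S \times A_2 \to S$ is a total deterministic transition function with no $\epsilon$-transitions. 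By induction on length, $M_2$ reaches state $\delta_1^{\,*}(s_0, \phi(w))$ after reading $w \in A_2^*$, hence $M_2$ accepts $w$ if and only if $M_1$ accepts $\phi(w)$, i.e.\ if and only if $w \in \phi^{-1}(W_{A_1}(G)) = W_{A_2}(G)$.

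There is no real obstacle here — the argument is essentially a bookkeeping exercise once one recognizes that a change of generating set is implemented by a monoid homomorphism $\phi$ with the property $\pi_{A_2} = \pi_{A_1} \circ \phi$. The only subtle point worth stating carefully is that $\phi$ must be defined on an alphabet closed under formal inversion so that $W_{A_2}(G) = \phi^{-1}(W_{A_1}(G))$ holds (not just $\subseteq$); this is why one insists on $w_{x^{-1}} = w_x^{-1}$. Swapping the roles of $A_1$ and $A_2$ gives the converse implication, completing the equivalence.
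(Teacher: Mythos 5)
The paper does not prove this lemma; it is stated as a citation to Anisimov, so there is no in-paper argument to compare against. Your proof is correct and is the standard one: a change of generating set is implemented by a monoid homomorphism $\phi: A_2^* \to A_1^*$ with $\pi_{A_2} = \pi_{A_1}\circ\phi$, giving $W_{A_2}(G)=\phi^{-1}(W_{A_1}(G))$, and closure of regular languages under inverse homomorphism (via the simulated automaton $\delta_2(s,a)=\delta_1^*(s,w_a)$) finishes the job. Two minor remarks: what is actually needed is only that $\overline{w_a}=\overline{a}$ for every letter $a\in A_2$, including the inverse letters --- taking $w_{x^{-1}}$ to be the formal inverse of $w_x$ is one convenient way to arrange this, not a logical necessity for the equality $W_{A_2}(G)=\phi^{-1}(W_{A_1}(G))$; and since the paper's FSAs may have partial transition functions, $\delta_2(s,a)=\delta_1^*(s,w_a)$ should be read as undefined whenever the simulation of $w_a$ dies, which does not affect the argument.
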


This means that having a regular word problem is truly a property of the group itself, as it does not depend on any particular presentation. It is often said the such a property is invariant under a change of generators. This allows for a complete characterization of groups with a regular word problem.   

\begin{thm}[Anisimov \cite{An}]\label{thm-regfinwp} 
A finitely generated group \(G\) has a regular word problem if and only if \(G\) is finite.
\end{thm}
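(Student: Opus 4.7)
The plan is to prove the two directions separately. For the (easy) "if" direction, suppose $G$ is finite and fix any generating set $X$ with $A = X \cup X^{-1}$. I would build an FSA whose states are the group elements themselves, take the start state and unique accept state to be the identity, and define the transition on input $a \in A$ from state $g$ to be $g\cdot\overline{a}$. A standard induction shows that after reading $w$ from $1$ the machine is in state $\overline{w}$, so $w$ is accepted precisely when $\overline{w}=1$. Hence $W_A(G)$ is regular.

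For the (substantive) "only if" direction, suppose $W_A(G) = L(M)$ for some FSA $M = (A,S,\varnothing,s_0,Y,\delta)$, and let $\delta^*:S\times A^*\to S$ be the usual extension of $\delta$. The key observation is a Myhill--Nerode style argument: distinct group elements must drive $M$ into distinct states from $s_0$. Concretely, suppose $w_1,w_2 \in A^*$ satisfy $\overline{w_1}\neq\overline{w_2}$ in $G$. Let $u \in A^*$ be any word representing $\overline{w_2}^{-1}$ (e.g.\ the formal inverse obtained by reversing $w_2$ and inverting each letter). Then $\overline{w_2 u}=1$, so $w_2 u \in W_A(G)$, while $\overline{w_1 u}=\overline{w_1}\,\overline{w_2}^{-1} \neq 1$, so $w_1 u \notin W_A(G)$. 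By determinism, if we had $\delta^*(s_0,w_1)=\delta^*(s_0,w_2)$, then also $\delta^*(s_0,w_1 u)=\delta^*(s_0,w_2 u)$, and these two states could not differ in their membership in $Y$, a contradiction.

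To finish, for each $g \in G$ pick a representative word $w_g \in A^*$ with $\overline{w_g}=g$, and define $\Phi:G\to S$ by $\Phi(g)=\delta^*(s_0,w_g)$. The argument above shows $\Phi$ is injective, so $|G| \leq |S| < \infty$, proving $G$ is finite. The only technical point worth being careful about is that the choice of $u=w_2^{-1}$ requires the alphabet to be closed under formal inverses, which holds by our definition $A = X\cup X^{-1}$, and that $\Phi$ does not a priori need to be well defined as a function of $g$ (different choices of $w_g$ could land in different states); what matters is only injectivity in the sense that the assignment from representatives to states separates group elements, and this is exactly what the Myhill--Nerode argument supplies. No serious obstacle arises beyond this bookkeeping.
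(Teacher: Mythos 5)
Your proof is correct. Note that the paper does not actually prove this statement---it is quoted as Theorem~\ref{thm-regfinwp} directly from Anisimov's paper \cite{An} with no argument supplied---so there is nothing internal to compare against; your two directions (the Cayley-graph automaton with state set $G$ for the ``if'' direction, and the Myhill--Nerode/pigeonhole separation of group elements by states for the ``only if'' direction) constitute the standard proof of this result. The only bookkeeping point worth noting is that the paper's FSA has a possibly partial transition function ($D \subset S \times A$), so ``the usual extension $\delta^*$'' is not automatically total; this is harmless, since either you complete the automaton with a sink state without changing $L(M)$, or you observe that every $w \in A^*$ is a prefix of some word in $W_A(G)$ (append a formal inverse), so $\delta^*(s_0,w)$ must in fact be defined for all $w$ whenever $L(M)=W_A(G)$.
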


Muller and Schupp proved analogous results for context-free languages in \cite{MS}. For example, Lemma 2 of \cite{MS} shows that, for context-free languages, the notion of a group \(G\) admitting a CF word problem is
independent of the choice of generating set. Before stating their main result,  recall
that a  group \(G\) is called \emph{virtually free} if it contains a free subgroup of finite index.

\begin{exm} 
Any free product \(G_1 \ast G_2\) where \(G_1\) and \(G_2\) are finite is virtually free. This follows from considering the natural map from \(G_1 \ast G_2\) to the direct product \(G_1 \times G_2\) which is finite. Nielsen showed in \cite{Ni} that  we have the following exact sequence
\[1 \rightarrow F_C \rightarrow G_1 \ast G_2 \rightarrow G_1 \times G_2 \rightarrow 1 .\]
The  kernel \(F_C\) is the free group generated by all the commutators \(C= \{ g_1g_2g_1^{-1}g_2^{-1} \mid g_1 \in G_1, g_2 \in G_2\}\), which is the required finite index free group. 
\end{exm}
\begin{exm}
The modular group \(PSL(2,\mathbb{Z})\) is isomorphic to \(\mathbb{Z}_2 \ast \mathbb{Z}_3\), hence virtually free by above. 
\end{exm}
\begin{exm}
Let \(G\) be a finite group and \(\psi : G \rightarrow \Aut(F_n)\) a homomorphism. Then \(F_n \rtimes_\psi G\) is virtually free.
\end{exm}

The following is the main result of \cite{MS} and it completely characterizes those groups \(G\) admitting a context-free word problem.

\begin{thm}[Muller and Schupp \cite{MS}]\label{ch-thm-3.9}
A finitely generated group \(G\) 
has a CF word problem if and only if \(G\) is virtually free. 
\end{thm}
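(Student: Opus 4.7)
The plan is to prove the two directions separately, since they require quite different techniques.

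For the forward direction (virtually free $\Rightarrow$ CF word problem), I would first verify the claim for a free group $F(X)$ directly. Take $A = X \cup X^{-1}$ and build a deterministic PDA that, on input $a \in A$, pushes $a$ onto the stack unless the current top of stack is $a^{-1}$, in which case it pops; the accept condition is that the stack returns to $\gamma_0$ after reading the whole word. This PDA implements free reduction and so recognizes exactly $W_A(F(X))$, giving $W_A(F(X)) \in \sL_{CF}$. To upgrade to a virtually free $G$ with free subgroup $F$ of finite index $k$, fix a transversal $\{t_1,\dots,t_k\}$ for $F$ in $G$ and a generating set $A$ for $G$. The PDA's finite-state control tracks which coset $Fg$ the prefix read so far lies in, while the stack performs free reduction in $F$ on a canonical coset representative; acceptance requires the current coset to be $F$ and the stack to be empty. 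This is essentially the argument in Lemma 2 of \cite{MS}, which also shows the CF property is independent of the choice of generating set.

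For the harder reverse direction, I would follow the strategy of Muller and Schupp. The first step is to apply the pumping lemma for CF languages to extract a geometric statement about the Cayley graph $\Gamma = \Gamma(G, A)$: the ``cones'' based at vertices far from $1$ fall, up to label-preserving isomorphism, into only finitely many types, and each cone admits a uniformly bounded ``frontier'' separating it from the rest of $\Gamma$. This forces $G$ to have more than one end unless it is finite outright. Consequently, by Stallings' theorem on ends of groups, an infinite such $G$ splits nontrivially as an amalgamated free product or an HNN extension over a finite subgroup.

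The second step is to iterate this Stallings decomposition on the vertex groups, verifying at each stage that the CF word problem restricts appropriately so that the argument can be reapplied. Once the iteration terminates, one has exhibited $G$ as the fundamental group of a finite graph of finite groups, which is virtually free by Bass--Serre theory.

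The main obstacle is the accessibility question in the iteration: why does the splitting process terminate after finitely many steps? Rather than appealing to Dunwoody's general accessibility theorem (which post-dates \cite{MS}), one must extract a quantitative bound directly from the context-free structure of $W_A(G)$, for instance bounding the depth of the splittings in terms of the pumping constant or the size of a CF grammar generating $W_A(G)$. This bookkeeping---translating between the combinatorial data of a grammar and geometric splittings of the Cayley graph---is the technical heart of the Muller--Schupp argument and the step I expect to be most delicate.
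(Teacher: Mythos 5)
First, note that the paper does not actually prove this theorem: it is imported verbatim from Muller and Schupp \cite{MS} with a citation, so there is no internal argument to compare yours against. Judged on its own terms, your outline does track the genuine Muller--Schupp strategy: the forward direction via a pushdown automaton performing free reduction, upgraded to a finite-index overgroup by carrying the coset of the free subgroup in the finite-state control (this is Schreier rewriting, and it is also where independence from the generating set is established); and the reverse direction via bounded cone types extracted from the pumping lemma, the theory of ends, Stallings' splitting theorem, and iterated decompositions over finite subgroups, finishing with the Karrass--Pietrowski--Solitar characterization of fundamental groups of finite graphs of finite groups as the virtually free groups.

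The genuine gap is exactly where you flag it: termination of the splitting process. Your proposal to extract ``a quantitative bound directly from the context-free structure, for instance bounding the depth of the splittings in terms of the pumping constant'' is not something you carry out, and it cannot be waved off as bookkeeping --- it is not what Muller and Schupp did either. Historically, \cite{MS} proved the theorem only for \emph{accessible} groups, together with the fact that context-free groups are finitely presented; the unconditional statement quoted here became a theorem only with Dunwoody's accessibility result \cite{Dun}, which appears in this paper's bibliography for precisely that reason. If you want to realize your instinct of deriving termination from the language constants, the workable route is not the pumping constant per se but the observation that the finite edge groups arising in the successive splittings have uniformly bounded order (they stabilize the bounded frontiers of the cones), so that Linnell's accessibility theorem for splittings over subgroups of bounded order applies. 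Either way, a substantial external theorem is required at this step, and your sketch as written does not supply one.
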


 
  \section{Results} \label{sec-vpl}
 
In this section we  prove new results which allow us to  study the word problem of a finitely generated group using visibly pushdown languages of nested words. We denote a presentation of $G$ by $\langle X \mid R \rangle$  and recall that we take \(A = X \cup X^{-1}\).  Also, recall that the word problem of \(G\) is denoted by \(W_A(G)\). A nested word may be denoted by \((w,\frown)\) with \(w \in A^*\), or by \(\wt{w} \in \wt{A}^*\). Finally recall that \(L \subset \wt{A}^*\) denotes a language of nested words.

 \subsection{The Nested Word Problem}
First it is necessary to give an appropriate definition of the word problem for nested words. Nested words have additional data of the matching relation meaning that the standard definition (i.e. \ref{defn-wp} here) does not make sense; Lemma \ref{lem-encode} says that we must work over the extended alphabet $\wt{A}$ when thinking about nested words.

\begin{definition}\label{def-vplwpdef}
Let $G$ be a finitely generated group  with presentation $G=\langle X \mid R \rangle$. We say that $G$  admits a nested word problem with respect to $A$ if there is a nested word language \(L \subset \wt{A}^*\) such that the forgetful map
is a surjection $$\rho: L \twoheadrightarrow W_A(G).$$
  
\end{definition}

\begin{rmk}\label{rmk-nwpdefn} We note that this definition includes groups with a regular word problem (i.e. finite groups)  in the following sense. The regular language $W_A(G)=L$ for a finite group $G$ can be considered as a VPL by identifying $A$ with the internal symbols of   $\wt{A}$. In this case we have a natural bijection $\rho: L \rightarrow W_A(G)$.  Of course, the full pre-image of $W_A(G)$  under $\rho$ also maps to the word problem.  In other words, the nested word problem for finite groups essentially ignores any matching relations. 
\end{rmk}

We may now state our main theorem which relates the nested word problem to groups with a context-free word problem. 

\begin{thm}\label{thm-nwp}
A finitely generated group \(G\) admits a VPL nested word problem if and only if it is virtually free.
\end{thm}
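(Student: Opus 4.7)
My plan is to deduce the theorem directly from the Muller--Schupp characterization (Theorem \ref{ch-thm-3.9}) together with the two bridges between VPLs and context-free languages supplied by Theorems \ref{thm-vpltocfl} and \ref{thm-vplontocfl}. Since being virtually free is equivalent to having a context-free word problem, it is enough to show that \(G\) admits a VPL nested word problem in the sense of Definition \ref{def-vplwpdef} if and only if \(W_A(G)\) is context-free.

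For the reverse direction, I would assume \(G\) is virtually free. Then \(W_A(G) \subset A^*\) is context-free by Theorem \ref{ch-thm-3.9}, and I would apply Theorem \ref{thm-vplontocfl} with \(L = W_A(G)\) to obtain a VPL \(\wt{L} \subset \wt{A}^*\) with \(\rho(\wt{L}) = W_A(G)\). This is exactly the condition in Definition \ref{def-vplwpdef}, so \(G\) admits a VPL nested word problem. This half is essentially immediate.

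For the forward direction, I would begin with a VPL \(L \subset \wt{A}^*\) satisfying \(\rho(L) = W_A(G)\). By Theorem \ref{thm-vpltocfl}, \(L\) is a context-free language over the tagged alphabet \(\wt{A}\). The forgetful map \(\rho : \wt{A}^* \to A^*\) is a length-preserving monoid homomorphism, namely the one dropping the corner tags letter by letter, and context-free languages are closed under monoid homomorphisms. Hence \(W_A(G) = \rho(L)\) is context-free, and a second application of Theorem \ref{ch-thm-3.9} yields that \(G\) is virtually free.

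The main point requiring care is the forward direction: we need the image of a VPL under the tag-forgetting map to be context-free and not to land in some strictly larger class. Via Theorem \ref{thm-vpltocfl} combined with standard closure of context-free languages under homomorphisms this is immediate, but if one prefers to avoid invoking such closure properties, a direct construction is available. Namely, build a nondeterministic PDA \(M'\) for \(W_A(G)\) that, on input \(w \in A^*\), nondeterministically guesses a tagging \(\wt{w}\) of \(w\) while simulating a VPA for \(L\), using the guessed call/internal/return status at each letter to decide whether to push, stay, or pop. Acceptance on some consistent guess witnesses \(w \in \rho(L) = W_A(G)\), and this is the only real step in the argument beyond citing the previously recorded theorems.
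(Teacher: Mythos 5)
Your proposal is correct and follows essentially the same route as the paper's own proof: the reverse direction via Theorem \ref{thm-vplontocfl}, and the forward direction via Theorem \ref{thm-vpltocfl} together with closure of context-free languages under the (homomorphic) forgetful map $\rho$, then Muller--Schupp in both directions. The additional sketch of a direct nondeterministic PDA construction is a fine alternative to citing closure under homomorphisms, but it is not needed and does not change the argument.
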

\begin{proof}
If $G$ admits a nested word problem, then there exists a VPL \(L \subset \wt{A}^*\) such that \(\rho\) is a surjection. Note that \(\rho\) is also a language homomorphism (see \cite{HU}). By Theorem \ref{thm-vpltocfl}, \(L\) is context-free over \(\wt{A}\). It is known  that context-free languages are closed under language homomorphism, hence \(\rho(L)=W_A(G)\) is also context-free, and \(G\) is virtually free by \ref{ch-thm-3.9}. 
Conversely, if $G$ is virtually free there is a CF language $L$ such that $L=W_A(G)$. Theorem \ref{thm-vplontocfl} then gives a VPL $L \subset \wt{A}^*$ mapping onto $W_A(G)$. 
\end{proof}

\begin{cor}\label{cor-changeofgen}
Let $\langle X_1 \mid R_1\rangle$ and $\langle X_2 \mid R_2\rangle$ be two presentations for a finitely generated group $G$.  Then $G$ admits a nested word problem with respect to $A_1$ if and only if $G$ admits a nested word problem with respect to $A_2$.
\end{cor}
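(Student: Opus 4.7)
The plan is to obtain the corollary as an essentially immediate consequence of Theorem \ref{thm-nwp}, which already characterizes groups admitting a VPL nested word problem in terms of the group-theoretic (presentation-independent) property of being virtually free. Since virtual freeness is intrinsic to $G$ and has nothing to do with a choice of generating set, the biconditional in the corollary should follow by simply chaining the two directions of Theorem \ref{thm-nwp}.

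More concretely, I would argue as follows. Suppose $G$ admits a nested word problem with respect to $A_1$, i.e.\ there is a VPL $L_1 \subset \wt{A}_1^{\,*}$ with $\rho(L_1) = W_{A_1}(G)$. By Theorem \ref{thm-nwp} applied to the presentation $\langle X_1 \mid R_1 \rangle$, this forces $G$ to be virtually free. Now apply the converse direction of Theorem \ref{thm-nwp} to the presentation $\langle X_2 \mid R_2 \rangle$: since $G$ is virtually free, there exists a VPL $L_2 \subset \wt{A}_2^{\,*}$ such that $\rho$ surjects $L_2$ onto $W_{A_2}(G)$. Hence $G$ admits a nested word problem with respect to $A_2$. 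The reverse implication is symmetric.

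The only thing worth pausing over is whether Theorem \ref{thm-nwp} really is as presentation-independent as it appears. Its proof uses Theorem \ref{ch-thm-3.9} of Muller and Schupp, which in turn relies on the fact (Lemma 2 of \cite{MS}) that having a context-free word problem is invariant under change of generators. So invariance of the nested word problem is ultimately inherited from invariance of the context-free word problem, with the passage between $W_A(G)$ and a VPL $L$ mediated in one direction by the language homomorphism $\rho$ (preserving context-freeness) and in the other direction by Theorem \ref{thm-vplontocfl}. No additional estimate or construction is needed.

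Because the argument is purely a matter of composing two instances of a known equivalence, there is no real obstacle; the only judgement call is stylistic, namely whether to state the proof as a one-line appeal to Theorem \ref{thm-nwp} or to spell out both directions explicitly. I would opt for the explicit version, since it highlights that the corollary is the nested-word analogue of Lemma \ref{lem-regwpcog} and Muller--Schupp's change-of-generators result, and thereby confirms that ``admitting a VPL nested word problem'' is genuinely a property of the group rather than of a chosen presentation.
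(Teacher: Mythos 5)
Your proof is correct and follows essentially the same route as the paper: the author likewise deduces virtual freeness from the nested word problem with respect to $A_1$, invokes the change-of-generators invariance of the context-free word problem, and then applies Theorem \ref{thm-vplontocfl} to produce a VPL surjecting onto $W_{A_2}(G)$. Your phrasing as "chain the two directions of Theorem \ref{thm-nwp}" is just a slightly more compressed packaging of the identical argument.
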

\begin{proof}
If $G$ admits a nested word problem with respect to $A_1$, then $G$ is virtually free and $W_{A_1}(G)$ is context-free. Having a context-free word problem is independent of the choice of generating set, hence $W_{A_2}(G)$ is also  context-free. Theorem \ref{thm-vplontocfl} gives a VPL mapping onto $W_{A_2}(G)$ so $G$ admits a nested word problem with respect to $A_2$.
\end{proof}

The advantage of working with nested words is that matching relations provide an intuitive way of thinking about the word problem, as is shown by the next example. 

\begin{exm}
Let $X=\{x_1,x_2,\cdots,x_n\}$ and consider the free group on \(n\) generators, \(F_n=\langle X \rangle\). Any word over \(A\) that represents the identity can be reduced to the empty word by successively deleting pairs of the form \(xx^{-1}\) or \(x^{-1}x\) for \(x \in X\), and this gives rise to a  matching relation defining such words. A particular example is the nested word \(\wt{w}=\cb \nms x_1 (\cb \nms x_3)^{-1} (\cb \nms x_n)^{-1}x_n \nms \rb  x_3 \nms \rb (x_1 \nms \rb)^{-1} \), where the matching relation is given by $\frown \;  = \{ (1,6), (2,5), (3,4)\}$. For the word $x_1x_1^{-1}x_1x_1^{-1}$,  the cancellation could be represented by the matching relation $\frown_1=\{ (1,4), (2,3)\}$ or $\frown_2\{ (1,2), (3,4)\}$. A canonical choice would be $\frown_2$  as it is associated to the path traveled by the word in the Cayley graph of $F_n$.
\end{exm}

This suggests that for the free group we can  find a VPL such that $\rho$ actually provides a bijection with the word problem. In other words, every word representing the identity in the free group is associated to a canonical matching relation. 

\begin{prop} \label{prop-freegpvpl}
Let  $X=\{x_1, \cdots, x_n\}$. The free group on $n$ generators \(F_n=\langle X\rangle\) admits a VPL nested word problem where $\rho$ is a bijection.
\end{prop}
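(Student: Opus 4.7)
The plan is to exhibit a visibly pushdown language $L \subset \wt{A}^*$ that assigns to each trivial word exactly one nested-word tagging, namely the one coming from left-to-right free reduction. Given $w = a_1 \cdots a_m$, I process $w$ left-to-right with an auxiliary stack of letters; at position $i$, if the current stack top is $a_i^{-1}$, pop it and tag $a_i$ as a return matched with the popped call, otherwise push $a_i$ and tag it as a call. Because free reduction in $F_n$ is confluent, $w \in W_A(F_n)$ if and only if this procedure terminates with an empty stack, in which case the resulting nested word has no pending symbols and every matched pair $(a_i, a_j)$ satisfies $a_j = a_i^{-1}$. Let $L$ be the image of $W_A(F_n)$ under this canonical tagging map.

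Next I would build a VPA $\wt{M}$ accepting exactly $L$. The key idea is to encode the current top of the stack into the state, so that on reading a call the VPA can detect the case in which a return should have been used instead. To keep this encoding faithful across pops, the stack alphabet uses pairs $(\sigma,\tau) \in (A \cup \{\gamma_0\})^2$, where $\sigma$ is the previous top and $\tau$ the newly pushed letter. From state $q_\sigma$, the call transition on $\cb\nms a$ pushes $(\sigma, a)$ and moves to $q_a$, but enters a dead state if $\sigma = a^{-1}$. From state $q_\tau$ with top $(\sigma,\tau)$, the return transition on $a\nms \rb$ succeeds only if $\tau = a^{-1}$ and then moves to $q_\sigma$. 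The accept condition is that the entire input has been consumed and the current state is $q_{\gamma_0}$.

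Finally I would verify that $\rho : L \to W_A(F_n)$ is a bijection. Surjectivity is immediate, since $\wt{M}$ simulates the canonical reduction procedure on any trivial $w$. For injectivity, suppose $\wt{u}, \wt{v} \in L$ satisfy $\rho(\wt{u}) = \rho(\wt{v})$; at each position the state of $\wt{M}$ is determined by the common underlying prefix, and from that state the transitions above force the tag at the next position uniquely, so $\wt{u} = \wt{v}$. The one subtle point is arranging that the VPA's state always reflects the current top of the stack, given that call transitions in the VPA formalism cannot inspect the stack directly; the pair-valued stack alphabet described above is what makes this bookkeeping work within the VPA model.
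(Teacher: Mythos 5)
Your proposal is correct and follows essentially the same route as the paper: it singles out the canonical matching relation coming from left-to-right free reduction and builds a deterministic VPA whose state tracks the top of the reduction stack, accepting exactly the canonically tagged trivial words so that $\rho$ is a bijection. Your pair-valued stack alphabet, which lets the automaton restore the ``current top'' into the state after each pop, is in fact a more careful treatment of a genuine bookkeeping subtlety that the paper's construction (which keeps only the last letter read in the state and is vague about what happens to the stack on an adjacent cancellation) passes over quickly.
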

\begin{proof}
We construct a VPA over \(\wt{A}^*\) to recognize the matching relation defined by successive cancellations as described above. The VPA has states \(S= Y \cup \{s_f\)\}, with accept states \( Y =\{(s_0,\wt{a}) \mid \wt{a} \in \wt{A}\} \), initial  state \((s_0,\epsilon)\),  and \(s_f \notin Y\) the fail state.  The stack alphabet is \(\Gamma=A\) and \(\Gamma_y=\varnothing\), implying that accepted words do not contain any pending calls or returns. There are transitions from \((s_0,\wt{a})\) to \(s_f\) described below, but there are no transitions out of \(s_f\). 

The machine keeps track of adjacent trivial relations using the second component of \((s_0,\wt{a})\) as follows. Consider the machine in state \((s_0,\wt{a})\) and reading in the next letter \(\wt{a}'\) such that \(\rho(\wt{a})^{-1}=\rho(\wt{a}')\).  It transitions to \((s_0,\epsilon)\) only if $\wt{a}=\cb \nms a$ and $\wt{a}'= a \nms \rb$; otherwise it transitions to $s_f$.  This ensures the tagging corresponds to the cancellation associated to the path of the word in the Cayley graph. 

In the case where \(\rho(\wt{a})^{-1} \neq \rho(\wt{a}')\) the action of the machine is determined by the tagging of \(\wt{a}'\). If an internal symbol is read the transition is to \(s_f\). On reading a call the underlying letter \(\rho(\wt{a})\) is written to the stack and the machine transitions to \((s_0,\wt{a}')\). Finally, on reading a return, \(\rho(\wt{a}')\) is compared to the letter on the top of the stack. If the pair is of the form \(xx^{-1}\) or \(x^{-1}x\) for \(x \in \{x_1,\cdots x_n\}\) the machine transitions to  \((s_0,\wt{a}')\); if not the machine transitions to \(s_f\).  It also transitions to the fail state on reading a return if the bottom of stack symbol \(\gamma_0\) is exposed. 
\end{proof}

\subsection{Closure Properties and the Nested Word Problem}

Finally, we prove some additional closure properties of VPLs. Though somewhat ad-hoc, these properties - together with the above Proposition \ref{prop-freegpvpl} - enable us to construct further examples of bijections (under $\rho$) between VPLs and the word problem of certain virtually free groups. This establishes a direct correspondence between language theoretic closure properties and the group theoretic properties of direct and semi-direct product. We conjecture that a similar bijection can be constructed in general for any virtually free group, but this lies outside the scope of the current paper.

\begin{definition}
Consider  \(L_1 \subset A^*\) and  \(L_2 \subset B^*\) over the alphabets \(A\) and \(B\) respectively. The shuffle of \(L_1\) and \(L_2\) is denoted by \(L_1 \bowtie L_2\) and is given by
\[ L_1\bowtie L_2=\{ u_1v_1u_2v_2 \cdots u_nv_n \mid u_1\cdots u_n \in L_1, v_1\cdots v_n \in L_2, u_i \in A^*, v_i \in B^* \} .\]
\end{definition}

\begin{lem}\label{thm-closurescramble}
Visibly pushdown languages are closed under shuffle with regular languages; i.e. if \(L \subset \wt{A}^*\) is a VPL and \(L_r \subset B^*\) is regular then \(L \bowtie L_r\) is a VPL. 
\end{lem}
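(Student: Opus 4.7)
The plan is to give a product-automaton construction. Let $\wt{M}_1$ be a VPA recognizing $L \subset \wt{A}^*$ with state set $S_1$, stack alphabet $\Gamma$, transitions $\delta_c,\delta_i,\delta_r$, initial state $s_0$, bottom-of-stack $\gamma_0$, and accept set $Y_1$; let $M_2 = (B, S_2, t_0, Y_2, \delta')$ be an FSA recognizing $L_r$. I will build a VPA $\wt{M}$ over the tagged alphabet $\wt{A} \cup B$, where every letter of $B$ is declared to be an internal symbol. The state set is $S_1 \times S_2$, the stack alphabet is $\Gamma$ with the same bottom symbol $\gamma_0$, the initial state is $(s_0, t_0)$, and the accept set is $Y_1 \times Y_2$.

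The transitions are defined to let the two machines act on their own alphabet without interference. For a call $\cb\nms a$ or return $a\nms\rb$ with $a \in A$, $\wt{M}$ executes exactly the push or pop of $\wt{M}_1$ on the $S_1$-component and leaves the $S_2$-component fixed; for an internal $a \in A$ it executes $\delta_i$ on $S_1$ and leaves $S_2$ fixed; and for $b \in B$ (always internal) it leaves $S_1$ and the stack untouched while updating $S_2$ according to $\delta'$. This is manifestly a VPA: pushes happen only on calls, pops only on returns, and $B$-letters are internal and so cannot touch the stack.

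Correctness follows from projecting computations. Given any accepting computation of $\wt{M}$ on a word $w$, the subword $w|_{\wt{A}}$ induces an accepting computation of $\wt{M}_1$ (because the $S_2$-component and the $B$-letters never alter the $S_1$-component or the stack), and $w|_{B}$ induces an accepting computation of $M_2$; writing $w$ as alternating blocks of $\wt{A}$- and $B$-letters exhibits $w \in L \bowtie L_r$. Conversely, given $u_1v_1\cdots u_n v_n$ with $u_1\cdots u_n \in L$ and $v_1\cdots v_n \in L_r$, the accepting computations of $\wt{M}_1$ and $M_2$ can be interleaved to yield an accepting computation of $\wt{M}$.

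The only real thing to check is that the visibly pushdown discipline is preserved, and this is where I expect any obstacle to sit: we need the tagging of the shuffled word to still determine a valid matching relation on $\wt{A}$-letters. This is immediate here because $B$-letters are inserted as internal symbols that neither push nor pop, so the nesting of calls and returns among the $\wt{A}$-letters of $w$ coincides with the nesting they had in the original word of $L$. Hence the matching relation is inherited unchanged, confirming that $L \bowtie L_r$ is a VPL over $\wt{A} \cup B$.
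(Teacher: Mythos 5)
Your proposal is correct and is essentially the same argument as the paper's: a product automaton with state set $S_1\times S_2$ in which $\wt{A}$-letters drive the VPA component (with its stack) and $B$-letters, treated as internal symbols, drive the FSA component, accepting on $Y_1\times Y_2$. You spell out the projection argument for correctness and the preservation of the visibly pushdown discipline a bit more explicitly than the paper does, but the construction is identical.
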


\begin{proof}
 We construct a VPA \(M_{\bowtie}\) that accepts \(L \bowtie L_r\). Take \(\wt{M} = (\wt{A}, S , \Gamma, s_0, \gamma_0,Y,  \Gamma_y,\delta)\) such that \(L=L(\wt{M})\), and \( M_r = (B, S_r , s_0^r , Y_r , \delta_r) \) such that \(L_r=L(M_r)\). We think of \(\wt{M}\) and \(M_r\) as operating side by side.
 The set of states of \(M_{\bowtie}\) is the product \(S \times S_r\), the initial state  \((s_0,s_0^r)\), and the stack alphabet and hierarchical accept states  \(\Gamma\) and \(\Gamma_y\) respectively. The transition function \(\Delta\) for \(M_{\bowtie}\) is then defined based on the symbol being read in. If in state \((s,s_r)\) and reading in \(\wt{a} \in \wt{A}\), the transition is given by \(\Delta((s,s_r),\wt{a})=(\delta(s,\wt{a}),s_r)\), where the appropriate stack operation takes place as it would have in \(\wt{M}\). On reading  \(b \in B\) while in state \((s,s_r)\), the transition is given by \(\Delta((s,s_r),b)=(s,\delta_r(s_r,b))\). The set of accept states is \(Y_{\bowtie}=\{(y,y_r)\mid y \in Y, y_r \in Y_r\}\), and a processed word is accepted if  \(M_{\bowtie}\) is in an accept state with stack contents \(\chi_y \in \Gamma_y^*\).
 \end{proof}

\begin{definition}
Let \(L \subset A\) be a language over \(A\). A finite re-labeling of \(L\) is a map \(\Phi: L \rightarrow A^*\) satisfying:
  \begin{enumerate}

  \item \(|w|=|\Phi(w)|\),
  \item There exists a FSA \(M_{\Phi}\) over \(A \times A\) such that \(L(M_{\Phi}) = \{(w,w')  \mid \Phi(w)=w'\}.\)
  \end{enumerate}
\end{definition}
\begin{rmk}\label{rmk-finiterelabeling1}
To apply this definition to VPLs it is necessary to add to Condition 1 the requirement that \(\Phi((w,\frown))=(w',\frown)\), i.e. that the matching relation is preserved.    
\end{rmk}
\begin{rmk}\label{rmk-finiterelabeling2}
Condition 2 says that \(\Phi\) may change the letters of any word \(w \in L\) but only in a controlled way in the sense that the re-labeling cannot depend on an unbounded amount of information.    
\end{rmk}

\begin{lem}\label{thm-closurerelabeling}
Visibly pushdown languages are closed under finite re-labeling; i.e. if \(L \subset \wt{A}^*\) is a VPL and \(\Phi: L \rightarrow \wt{A}^*\) is a finite re-labeling then \(\Phi(L)\) is a VPL. 
\end{lem}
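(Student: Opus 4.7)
The plan is to construct a VPA for $\Phi(L)$ by running, in parallel, the VPA $\wt{M}$ for $L$ together with the FSA $M_\Phi$ witnessing the graph of $\Phi$, while nondeterministically guessing at each step the preimage under $\Phi$ of the symbol just read. The key observation is that by Remark \ref{rmk-finiterelabeling1} the map $\Phi$ preserves the tagging; in particular, the guessed preimage $\wt{a}$ and the read symbol $\wt{a}'$ always carry the same tag (call, return, or internal), so the stack operation prescribed by $\wt{M}$ on $\wt{a}$ is exactly the one that the visibly pushdown protocol permits on $\wt{a}'$. This is what makes the product machine an honest VPA rather than merely a PDA over $\wt{A}$.

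Explicitly, write $\wt{M}=(\wt{A},S,\Gamma,s_0,\gamma_0,Y,\delta)$ with $L(\wt{M})=L$, and let $M_\Phi=(\wt{A}\times\wt{A}, S_\Phi, s_0^\Phi, Y_\Phi, \delta_\Phi)$ be the FSA recognizing $\{(\wt{w},\wt{w}')\mid \Phi(\wt{w})=\wt{w}'\}$. The product machine $M'$ has state set $S\times S_\Phi$, initial state $(s_0,s_0^\Phi)$, stack alphabet $\Gamma$ with bottom symbol $\gamma_0$, and accept states $Y\times Y_\Phi$. From a state $(s,s^\Phi)$ reading $\wt{a}'\in\wt{A}$, the machine nondeterministically picks some $\wt{a}\in\wt{A}$ carrying the same tag as $\wt{a}'$ for which $\delta_\Phi(s^\Phi,(\wt{a},\wt{a}'))$ is defined, then transitions to $\bigl(\delta(s,\wt{a}),\,\delta_\Phi(s^\Phi,(\wt{a},\wt{a}'))\bigr)$ and performs on the stack exactly the operation $\wt{M}$ would perform on $\wt{a}$. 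The length condition $|w|=|\Phi(w)|$ ensures that the two simulated machines stay in lock-step. A straightforward induction on input length shows that accepting runs of $M'$ on $\wt{w}'$ are in bijection with pairs consisting of some $\wt{w}\in L$ satisfying $\Phi(\wt{w})=\wt{w}'$ together with an accepting run of $\wt{M}$ on $\wt{w}$; consequently $L(M')=\Phi(L)$.

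The main obstacle is that Definition~2.10 as stated in the paper requires a deterministic VPA, whereas $M'$ is nondeterministic in its choice of preimage $\wt{a}$. To close this gap I would invoke the standard result of Alur and Madhusudan \cite{AM} that every nondeterministic VPA can be determinized into an equivalent deterministic VPA; this is precisely the feature that distinguishes VPLs from general context-free languages and that underlies the closure properties collected in Theorem~\ref{thm-closurevpl}. Applying determinization to $M'$ yields a deterministic VPA recognizing $\Phi(L)$, and hence $\Phi(L)$ is visibly pushdown, as required.
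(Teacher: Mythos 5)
Your proof is correct and follows essentially the same product-automaton approach as the paper, whose own proof is a two-line sketch saying only that $\wt{M}$ and $M_\Phi$ are run in parallel as in the shuffle lemma. In fact you supply the one genuinely nontrivial detail that sketch omits: since $\Phi$ need not be injective, the product machine must guess preimages nondeterministically, and your appeal to the Alur--Madhusudan determinization theorem for VPAs is exactly the right way to reconcile this with the paper's deterministic definition of a VPA (and your observation that tag-preservation from Remark \ref{rmk-finiterelabeling1} is what keeps the product machine visibly pushdown is likewise the correct key point).
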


\begin{proof}
The proof is similar to that of Lemma \ref{thm-closurescramble}. The VPA \(\wt{M}\) accepting \(L\) keeps track of the matching relation, while \(M_{\Phi}\) keeps track of the re-labeling. 
 \end{proof}

Using these properties we provide explicit constructions for VPLs that are in bijection with the word problem for two classes of virtually free groups.

\begin{cor} \label{thm-dpvpl}
Let  \(G\) be a finite group. The direct product \(F_n \times G\)  admits a VPL nested word problem where $\rho$ is a bijection.
\end{cor}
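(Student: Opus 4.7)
The plan is to realize $W_A(F_n \times G)$ as the shuffle of the VPL from Proposition \ref{prop-freegpvpl} with the (regular) word problem of $G$, and then verify that the forgetful map is a bijection. Fix a presentation in which $F_n$ is freely generated by $X = \{x_1,\dots,x_n\}$ and $G$ is generated by a finite set $Y$ disjoint from $X \cup X^{-1}$; set $A_1 = X \cup X^{-1}$, $A_2 = Y \cup Y^{-1}$, and $A = A_1 \sqcup A_2$. Because the two factors commute in a direct product, a word $w \in A^*$ is trivial in $F_n \times G$ if and only if its $A_1$-subword is trivial in $F_n$ and its $A_2$-subword is trivial in $G$; equivalently $W_A(F_n \times G) = W_{A_1}(F_n) \bowtie W_{A_2}(G)$.

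By Proposition \ref{prop-freegpvpl} there is a VPL $L_1 \subset \wt{A_1}^*$ on which $\rho$ restricts to a bijection onto $W_{A_1}(F_n)$, and by Theorem \ref{thm-regfinwp} the language $L_2 := W_{A_2}(G)$ is regular over $A_2$. Set $L := L_1 \bowtie L_2$, viewed as a sublanguage of $\wt{A}^*$ by tagging every $A_2$-letter as an internal symbol. Lemma \ref{thm-closurescramble} then guarantees that $L$ is a VPL.

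It remains to verify that $\rho : L \twoheadrightarrow W_A(F_n \times G)$ is a bijection. For surjectivity, given $w \in W_A(F_n \times G)$, extract its $A_1$-subword $w_1 \in W_{A_1}(F_n)$ and its $A_2$-subword $w_2 \in W_{A_2}(G)$; lift $w_1$ to its unique preimage $\wt{w_1} \in L_1$ under $\rho$, and form $\wt{w} \in \wt{A}^*$ by transferring the call/return/internal tagging of $\wt{w_1}$ onto the $A_1$-positions of $w$ while tagging every $A_2$-position as internal. By construction $\wt{w} \in L$ and $\rho(\wt{w}) = w$. For injectivity, any two preimages of $w$ in $L$ must agree on the $A_2$-letters (all forced internal) and their $A_1$-parts both map under $\rho$ to the same $w_1$, so Proposition \ref{prop-freegpvpl} forces these $A_1$-parts, and hence the entire nested words, to coincide.

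The main obstacle is the bookkeeping around the matching relation: one has to verify that lifting $\wt{w_1}$ to a tagging of $w$ defines a legitimate nested word whose matching satisfies the nesting property of Definition \ref{definition-nestedword}. This is where disjointness $A_1 \cap A_2 = \emptyset$ is essential: the $A_1$-positions form a distinguished subset of the positions of $w$, the matched pairs inherited from $\wt{w_1}$ sit entirely among them, and the interleaved internal $A_2$-letters are transparent to the matching and thus cannot violate the nesting property. With this routine check in place, the three cited results combine to yield the desired VPL and bijection.
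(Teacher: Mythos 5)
Your proposal is correct and follows essentially the same route as the paper: decompose $W_{\mathbb{A}}(F_n\times G)$ as the shuffle of the free-group VPL from Proposition \ref{prop-freegpvpl} with the regular word problem of $G$, and apply Lemma \ref{thm-closurescramble}. The only difference is that you spell out the surjectivity/injectivity check for $\rho$, which the paper simply asserts.
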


\begin{proof}
Again take $X=\{ x_1, \cdots, x_n\}$ and consider $F_n=\langle X \rangle$ with alphabet \(A = X \cup X^{-1}\). For \(|G|=m\) take the alphabet \(B=G=\{y_1,\cdots,y_m\}\). Denote by \(\mathbb{A}\) the union \(\mathbb{A}=A \cup B\).  By Proposition \ref{prop-freegpvpl} we have a VPL \(L \subset \wt{A}^*\) such that \(\rho: L \rightarrow W_A(F_n)\) is a bijection, and by Theorem \ref{thm-regfinwp} we have a regular language \(L_r=W_B(G)\). The language \(L \bowtie L_r\) is a VPL by Lemma \ref{thm-closurescramble} and \[\rho: L \bowtie L_r \rightarrow W_{\mathbb{A}}(F_n \times G) \] is a bijection.
\end{proof}

\begin{cor} \label{thm-sdpvpl}
Let  \(S_m\) be the symmetric group on \(m \leq n\) letters, and consider the canonical homomorphism  \(\psi: S_m \rightarrow \Aut(F_n)\) where \(\sigma \in S_m\) acts on \(F_n\) by permuting generators. The semi-direct product \(F_n \rtimes_\psi S_m\)  admits a VPL nested word problem where $\rho$ is a bijection.
\end{cor}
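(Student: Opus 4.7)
The plan parallels the proof of Corollary \ref{thm-dpvpl}, but composes a finite re-labeling at the end to absorb the action of $S_m$ on $F_n$. Write $A = X \cup X^{-1}$ with $X = \{x_1,\dots,x_n\}$, $B = S_m$, and $\mathbb{A} = A \cup B$. Proposition \ref{prop-freegpvpl} supplies a VPL $L \subset \wt{A}^*$ for which $\rho : L \to W_A(F_n)$ is a bijection, and Theorem \ref{thm-regfinwp} makes $L_r := W_B(S_m) \subset B^*$ regular. Since $A$ and $B$ are disjoint, Lemma \ref{thm-closurescramble} makes $L \bowtie L_r$ a VPL whose elements decompose uniquely as $w_0 b_1 w_1 b_2 \cdots b_k w_k$ with $w_0 w_1 \cdots w_k$ trivial in $F_n$ and $b_1 b_2 \cdots b_k$ trivial in $S_m$.

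Next I would build a finite re-labeling $\Phi$ on $L \bowtie L_r$ that fixes every $B$-letter, preserves the call/return/internal tag at every position, and replaces each $A$-letter $x$ sitting in the $i$-th block by $\psi(b_1 \cdots b_i)^{-1}(x)$. The witnessing FSA over $\mathbb{A} \times \mathbb{A}$ has state set $S_m$: the state tracks the running product $b_1 \cdots b_i$, it is updated by right multiplication upon reading a $B$-input, and the allowable re-labelling of an $A$-input is read off from the current state via $\psi^{-1}$. Since $|S_m|=m!$ is finite and $\Phi$ only alters the underlying symbol at a position (never the position itself, nor its tag), the condition of Remark \ref{rmk-finiterelabeling1} is met and Lemma \ref{thm-closurerelabeling} implies that $L' := \Phi(L \bowtie L_r)$ is a VPL over $\wt{\mathbb{A}}$.

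To verify that $\rho : L' \to W_\mathbb{A}(F_n \rtimes_\psi S_m)$ is a bijection, I would iterate the identity $b\,v = \psi(b)(v)\,b$ in the semi-direct product to rewrite an arbitrary word as
\[
v_0 b_1 v_1 \cdots b_k v_k \;=\; \big(v_0 \cdot \psi(b_1)(v_1) \cdots \psi(b_1 \cdots b_k)(v_k)\big)\cdot (b_1 \cdots b_k),
\]
so triviality in $F_n \rtimes_\psi S_m$ is equivalent to the conjugated $A$-product being $1$ in $F_n$ \emph{and} $b_1\cdots b_k=1$ in $S_m$. Setting $v_i = \psi(b_1 \cdots b_i)^{-1}(w_i)$ yields $\psi(b_1\cdots b_i)(v_i)=w_i$, turning the former condition into $w_0 w_1 \cdots w_k = 1$, which is exactly the defining condition for membership in $L \bowtie L_r$. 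Since the $A$-$B$ split of any $v \in \mathbb{A}^*$ is forced by the disjoint alphabets, the inverse re-labeling is single-valued, and when combined with the canonical matching supplied by the $\rho$-bijection of Proposition \ref{prop-freegpvpl} on the $A$-part (the $B$-part being internal symbols), it yields a unique preimage in $L'$.

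The main obstacle is not in the algebra but in ensuring that the finite re-labeling interacts cleanly with the nested structure in the sense of Remark \ref{rmk-finiterelabeling1}. The saving observation is that $\Phi$ is letter-local up to a single auxiliary piece of state (the current element of $S_m$) and never touches a tag, so the matching relation carries over verbatim; once this is in place, the rest of the verification is routine bookkeeping analogous to Corollary \ref{thm-dpvpl}.
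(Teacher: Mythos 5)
Your proof is correct and takes essentially the same route as the paper: form the shuffle $L \bowtie W_B(S_m)$ and then apply a finite re-labeling induced by $\psi$, invoking Lemmas \ref{thm-closurescramble} and \ref{thm-closurerelabeling}. In fact you supply the details the paper leaves implicit --- the explicit FSA with state set $S_m$ witnessing $\Phi$ and the rewriting $v_0 b_1 v_1 \cdots b_k v_k = \bigl(\prod_i \psi(b_1\cdots b_i)(v_i)\bigr)(b_1\cdots b_k)$ that verifies $\rho$ is a bijection --- so nothing further is needed.
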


\begin{proof}
As above take \(L \subset\wt{A}^*\) for the word problem of \(F_n\) and \(L_r=W_B(S_m)\), with \(\mathbb{A}=A \cup B\). The map \(\psi\) induces a finite re-labeling \(\Phi: L \bowtie L_r \rightarrow \wt{\mathbb{A}}^*\). The language \(\Phi(L \bowtie L_r)\) is a VPL by Lemmas \ref{thm-closurescramble} and \ref{thm-closurerelabeling} and  \[\rho: \Phi(L \bowtie L_r) \rightarrow W_{\mathbb{A}}(F_n \rtimes_\psi S_m)\] is a bijection.
\end{proof}  

\section*{Acknowledgments} 
I am grateful to my advisor Hans U. Boden for guidance, encouragement, and many useful discussions. I would also like to thank Kim P. Huynh, Caroline Junkins, and Kyle Vincent for commenting on drafts of the paper. This work was partially supported by NSERC.

 

\bibliographystyle{plainnat} 
\renewcommand{\bibname}{References} 

\end{document}